\newcommand{\keywords}[1]{\par\addvspace\baselineskip
\noindent\keywordname\enspace\ignorespaces#1}
\begin{document}

\mainmatter  

\title{Verifying Quantum Programs: \\
From Quipper to QPMC\thanks{This work has been partially supported by the GNCS group of INdAM.}}

\titlerunning{Verifying Quantum Programs: From Quipper to QPMC}

%
%
\author{Linda Anticoli\inst{1}  \and Carla Piazza\inst{1} \and Leonardo Taglialegne\inst{1} \and Paolo Zuliani\inst{2}}
\authorrunning{Linda Anticoli \and Carla Piazza  \and Leonardo Taglialegne \and Paolo Zuliani}

\institute{
Dept. of Mathematics, Computer Science and Physics,
University of Udine, Italy
\mailud
\and
School of Computing Science,
Newcastle University, 
United Kingdom
\mailne}

%
%

\maketitle

\def\va{\mathbf{a}}
\def\vb{\mathbf{b}}
\def\vc{\mathbf{c}}
\def\vd{\mathbf{d}}
\def\vf{\mathbf{f}}
\def\vg{\mathbf{g}}
\def\vk{\mathbf{k}}
\def\vi{\mathbf{i}}
\def\vp{\mathbf{p}}
\def\vs{\mathbf{s}}
\def\vu{\mathbf{u}}
\def\vv{\mathbf{v}}
\def\vx{\mathbf{x}}
\def\vy{\mathbf{y}}
\def\vz{\mathbf{z}}

\def\natural{\mathbb{N}}
\def\real{\mathbb{R}}
\def\bool{\mathbb{B}}

\def\true{true}
\def\false{false}

\def\ParaSet{P}
\def\PredSet{\Sigma}
\newcommand{\ReachSet}[2]{\mathcal{R}^{#1}_{#2}}
\newcommand{\Behavior}[2]{\mathcal{W}^{#1}_{#2}}

\def\pred{\sigma}
\newcommand{\until}[1]{\mathcal{U}_{#1}}
\newcommand{\release}[1]{\mathcal{R}_{#1}}
\newcommand{\eventually}[1]{\Diamond_{#1}}
\newcommand{\always}[1]{\Box_{#1}}

\newcommand{\CharFun}[3]{\mathcal{X}(#1,#2,#3)}

\newcommand{\ParaLS}[1]{A_{#1} \vp \leq b_{#1}}

\begin{abstract}

In this paper we present a translation from the quantum programming language Quipper to the
QPMC model checker, with the main aim of verifying Quipper programs.
Quipper is an embedded functional programming language
for quantum computation. It is above all a circuit description language,
for this reason it uses the vector state formalism and its main purpose is
to make circuit implementation easy providing high level operations for
circuit manipulation. Quipper provides both an high-level circuit building interface and a simulator.
QPMC is a model checker for quantum protocols based on the
density matrix formalism. QPMC extends the probabilistic model checker IscasMC allowing to formally
verify properties specified in the temporal logic QCTL on Quantum Markov Chains.
We implemented and tested our translation on several quantum algorithms, including Grover's
quantum search.

\keywords{Quantum Languages, Quantum Circuits, Model Checking}
\end{abstract}


\section{Introduction}\label{sec:introduction}
The specification of algorithms in human readable form and their translation into machine executable code is one of the main goals of high-level programming languages. Quantum algorithms and protocols are usually described by quantum circuits (i.e., circuits involving quantum states and quantum logic gates). Even if such circuits have a simple mathematical description they could be very difficult to realise in practice without a deep knowledge of the essential features of the physical phenomena under consideration.  The above reasons justify the need for tools that permit to abstract from a low-level description of quantum algorithms and protocols allowing also people that know very little of quantum physics to program a quantum device.

The introduction of high-level formalisms allows to define and automatically verify formal properties of algorithms abstracting away from low-level physical details. Formal techniques are an important tool for the validation and verification of programs in classical computer science. Experimental verification (testing) could be done, but there is no assurance that each possible error is avoided. With formal verification techniques such as \emph{model checking} we can test temporal properties of an algorithm evaluating all possible cases. In the context of quantum computation, that is based on the counter-intuitive laws of quantum physics, the possibility of testing quantum protocols is very important. In particular, protocols for quantum cryptography, that are deeply investigated at the moment hoping for future applications in the secure transmission of information, require certifications of correctness. 

Although both the quantum computation and verification fields are quite new, we found two interesting tools: the functional language Quipper \cite{Selinger2014} and the model checking system QPMC \cite{QPMC}, which we decided to use as a starting point for the development of a framework providing both a high-level programming style and formal verification tools. In particular, Quipper is a quantum programming language based on Haskell that allows to build quantum circuits by describing them in a simple programming style and provides the possibility to simulate the circuit. QPMC is a model checker for quantum protocols that uses an extension of PCTL, a probabilistic temporal logic, to verify properties of quantum protocols.  Quipper has been used to program a set of non-trivial quantum algorithms, it is supported by a community and provides a high-level programming environment based on Haskell. Unfortunately Quipper lacks of a built-in formal verification tool. On the other hand, QPMC supports formal verification but it is based on a low-level specification language. Hence, we decided to build a bridge between them, translating Quipper code into the QPMC formalism, thus providing an ad-hoc verification framework to Quipper programmers. 

This is just a first step in the direction of providing a complete programming framework for quantum computing. As the authors point out in \cite{QPMC}, QPMC is intended for verification
of \emph{classical} properties for which only the measurement outcomes
as well as the probabilities of obtaining them are relevant. Quantum
effects caused by superposition, entanglement, etc., are merely employed
to increase the efficiency or security of the protocol. Hence, more sophisticated logical formalisms allowing to specify/verify quantum effects and more in general reversible computation properties should be introduced.

The paper is organized as follows. In Section \ref{sec:preliminaries} we recall some basic quantum notations and briefly introduce Quipper and QPMC. In Section \ref{sec:translation} we define an abstract algorithm for translating Quipper circuits int Quantum Markov Chains, i.e., QPMC models. In Section \ref{sec:implementation} we describe our implementation of the translation algorithm. In Seciton \ref{sec:experiments} we discuss some experimental results on the verification of Grover's algorithm and some scalability tests. Section \ref{sec:conclusion} ends the paper.

\section{Preliminaries}\label{sec:preliminaries}

\subsection{Mathematical Quantum Models}\label{sec:quantum}

Quantum systems are represented through complex Hilbert spaces. A complex Hilbert space $\mathcal H$ is a complete vector space equipped with an inner product inducing a complete metric space. In particular, we will consider quantum systems described by finite dimensional Hilbert spaces of the form $\mathbb C^{2^k}$.
The elements of $\mathcal{H}$ (vectors) are denoted by either $\psi$ or $|\psi\rangle$ (i.e., ket notation). The notation $\langle \psi |$  (i.e., bra notation) denotes the transposed conjugate of $|\psi \rangle$.
The scalar product of two vectors $\varphi$ and $\psi$ in $\mathcal{H}$ is denoted by $\langle\varphi|\psi\rangle$, whereas $|\varphi\rangle \langle \psi |$ denotes the linear operator 
defined by $|\varphi\rangle$ and $\langle \psi|$.  We use $I$ to denote the identity matrix and $tr(\cdot)$ for the matrix trace. 

There are two possible formalisms based on Hilbert spaces for quantum systems: the \emph{state vector} formalism and the \emph{density matrix} one. We briefly introduce both of them, since Quipper is based on state vectors, while QPMC exploits density matrices.

\subsubsection{State Vector Formalism}


The \emph{state} of a quantum system is described by a \emph{normalized vector}  $|\psi\rangle \in \mathcal{H}$, i.e., $\lVert|\psi\rangle\rVert = \sqrt{\langle\psi|\psi\rangle}= 1$. 
The normalization condition is  related to the probabilistic interpretation of quantum mechanics. 

The temporal \emph{evolution} of a quantum system is described by a \emph{unitary operator} (see, e.g., \cite{Nielsen-Chuang}). 
In particular, a linear operator $U$ is unitary if  its conjugate transpose $U^{\dag}$  coincides with its inverse $U^{-1}$. 
Unitary operators preserve inner products and, as a consequence, norms of vectors. 
In absence of any measurement process, the state $|\psi_0\rangle$ at time $t_0$ evolves at time $t_1$ through the unitary operator $U$ to the state
$$|\psi_1\rangle = U \ |\psi_0\rangle$$

An \emph{observable} is a property of a physical system that can be measured, i.e., a physical quantity such as energy, position, spin. Observables are described by \emph{Hermitian 
operators} (see, e.g., \cite{Preskill}). 
A linear operator $A$ is Hermitian if $A = A^{\dag}$.  
Assuming non degeneracy, an 
Hermitian operator $A$ can be decomposed as 
$$A=\sum_{i=1}^n a_i | \varphi_i\rangle \langle \varphi_i|$$ where the $a_i$'s ($|\varphi_i\rangle$'s) are the eigenvalues (eigenvectors, respectively) of $A$.
The eigenvalues of a Hermitian operator are real.
 Given a system in a state $|\psi\rangle$,
the outcome of a measurement of the observable $A$ is one of its eigenvalues $a_i$ and 
the state vector of the system after the measurement is
\begin{equation*}\frac{(|\varphi_i\rangle\langle \varphi_i|) |\psi\rangle}{||(|\varphi_i\rangle\langle \varphi_i|) |\psi\rangle ||}\end{equation*}
with probability
\begin{equation*}p(a_i)  = || (|\varphi_i\rangle\langle \varphi_i|) |\psi\rangle ||^2 = \langle\psi|(|\varphi_i\rangle\langle \varphi_i|) |\psi\rangle\end{equation*}


\subsubsection{Density Matrix Formalism}\label{sec:density}
Here density matrices take the role of state vectors. However, the states described by state vectors on Hilbert spaces are idealized descriptions that cannot characterize statistical (incoherent) mixtures which often occur in Nature. Density matrices allow to represent also such mixed states.


A matrix $\rho$ is positive if for each vector $|\psi\rangle$ it holds that $\langle\psi|\rho|\psi\rangle \geq 0$. 
The \emph{state} of a quantum system is described by an Hermitian, positive matrix $\rho$ with $tr(\rho)=1$.
Such matrices are called \emph{density} matrices.

Given a normalized vector $|\psi\rangle$ representing the state of a system through the state vector formalism, the correspondent density matrix is $|\psi\rangle \langle \psi|$.


\emph{Evolutions} and \emph{measurements} of quantum systems are now described by \emph{superoperators} \cite{Nielsen-Chuang}. 
A superoperator is a (linear) function $\mathcal{E} : \rho_0 \to \rho_1$ which maps a
density matrix $\rho_0$ at time $t_0$ to a density matrix $\rho_1$ at time $t_1>t_0$
that satisfies the following properties:
$\mathcal{E}$ preserves hermiticity; 
$\mathcal{E}$ is trace preserving;
$\mathcal{E}$ is completely positive.

Given a unitary operator $U$ the corresponding superoperator $SO(U)$ can be defined as follows:
$$SO(U)(\rho_0) = U\rho_0 U^{\dag}$$

A quantum \emph{measurement} is described by a collection $\{M_i\}$ of measurement operators satisfying the following condition
\begin{equation*}\sum_i{M_i^\dag M_i} = I\end{equation*} 
The index \emph{i} refers to the possible measurements outcomes. If $\rho$ is the state before the measurement, then the result is \emph{i} and the state after the measurement is 
\begin{equation*}\frac{M_i \rho M_i^\dag}{tr(M_i \rho M_i^\dag)}\end{equation*}
with probability
\begin{equation*}p(i) = tr(M_i \rho M_i^\dag)\end{equation*}

Given an observable $A=\sum_{i=1}^n a_i | \varphi_i\rangle \langle \varphi_i|$ in the state vector formalism, its correspondent in the density matrix one is $\{|\varphi_i\rangle \langle \varphi_i|\}$.

\subsection{Quipper}\label{sec:quipper}

Quipper is an embedded functional programming language for quantum
computation \cite{Selinger2013} based on the Knill's QRAM model \cite{knill-qram} of quantum computation. 
This model uses both a classical and a quantum device to perform a quantum computation. The classical device performs classical computations (control flow, test, loops) and the
quantum computer is a specialised device that is able to perform only
two kinds of instruction: unitary operations and measurements.

Quipper has a collection of data types,
combinators, and a library of functions within Haskell, together with
an idiom, i.e., a preferred style of writing embedded programs \cite{Selinger2013}.
It provides an extended circuit model of quantum computation which is concerned with qubits 
 and unitary gates 
 and allows also
classical wires (whose state is a classical bit) and gates within
a circuit.

Quipper is above all a circuit description language, for this reason
it uses the state vector formalism and its main purpose is
to make circuit implementation easier providing high level operations for circuit manipulation.
A Quipper program is a function that inputs some quantum and classical data, performs
state changes on it, and then outputs the changed quantum/classical data. This
is encapsulated in a Haskell monad called \texttt{Circ}, which from an abstract point of view returns a quantum circuit.
The philosophy of the Quipper paradigm
is that qubits are held in variables and gates are applied to them
one at a time. 
A set of predefined gates (e.g., \texttt{hadarmard}, \texttt{cnot}, \dots), together with the possibility of specifying \emph{ancilla} qubits and \emph{controls}, are provided.

In this paper we focus on the \texttt{Circ} monad of Quipper, where a sequence of unitary and measurement gates can be applied to qubits and bits.
Quipper allows to generate a graphical representation and to simulate through three different simulators a circuit written in the monad. 
In Figure \ref{fig:example_1}
we show the graphical representation of a simple quantum circuit in which the Hadamard gate is applied to one qubit. Such circuit is defined in Quipper through the following code.

\lstinputlisting[language=Haskell, breaklines=true, firstline=15, basicstyle=\scriptsize\ttfamily\linespread{0.5}, lastline=18]{deutsch_paper.hs}
\vspace{-8ex}
\begin{figure}[H]
\begin{center}\includegraphics[scale=0.30]{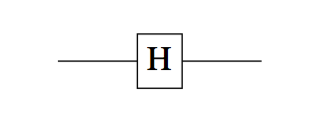}\end{center}
\protect\caption{One qubit circuit }
\label{fig:example_1}
\end{figure}

\vspace{-6ex}

\subsection{QPMC: Quantum Program/Protocol Model Checker}

QPMC is a model checker for quantum programs and protocols based on the density matrix formalism
available in both web-based and off-line version at http://iscasmc.ios.ac.cn/too/qmc.
It takes in input programs written in
an extension of the guarded command language
PRISM \cite{PRISM} that permits, in addition to the constants definable
in PRISM, the specification of types \texttt{vector}, \texttt{matrix}, and \texttt{superoperator}.
QPMC supports the bra-ket notation and inner, outer and tensor product can be written using it.

The semantics of a QPMC program is given in terms of 
\emph{superoperator weighted Markov chain}, which is a Markov chain in which the
state space is taken classical, while all quantum effects are encoded in the superoperators labelling the transitions 
(see, e.g., \cite{QPMC,MCQMC2013}). 
Differently from what we defined in Section \ref{sec:density}, QPMC superoperators are not necessarily trace-preserving, they are just completely positive linear operators.
A trace-non-increasing superoperator describes processes in which extra-information is obtained by \emph{measurement}. 
Let $\mathcal{S(H)}$ be the set of superoperators over a Hilbert
space $\mathcal{H}$ and $\mathcal{S^I}(\mathcal{H})$ be the subset of
trace-nonincreasing superoperators.
Given  a density matrix $\rho$ representing the state of a system, 
$\mathcal{E} \in \mathcal{S^I(H)}$ implies that 
$tr(\mathcal{E}(\rho)) \in [0,1]$. Hence, it is natural to regard the set $\mathcal{S^I}(\mathcal{H})$ as the
quantum correspondent of the domain of traditional probabilities \cite{QPMC}.

Let $\mathcal{E}, \ \mathcal{F} \in \mathcal{S(H)}$ we say that $\mathcal{E} \mathbf{\lesssim} \mathcal{F}$ if for any quantum state $\rho$ it holds that $\ tr(\mathcal{E}(\rho)) \leq tr(\mathcal{F}(\rho))$. 
A QMC is a discrete time Markov chain, where classical probabilities are replaced with quantum probabilities. 
\begin{definition}[Quantum Markov Chain \cite{MCQMC2013,QPMC}] A superoperator
weighted Markov chain, also referred to as quantum Markov chain (herein
QMC) over a Hilbert space $\mathcal{H}$ is a tuple $\mathit{(S, Q, AP,
L)}$, where:
\begin{itemize}
\item $S$ is a countable (finite) set of classical states; 
\item $Q: S \times S \rightarrow \mathcal{S^I(H)}$
is called the transition matrix where for each $s \in S$, the superoperator $\sum_{t\in S} {Q(s,t)}$ is trace-preserving
\item $AP $ is a finite set of atomic propositions
\item $L : S \to 2^{AP}$ is a labelling function
\end{itemize}
\end{definition}

The aim of QPMC is to provide a formal framework where to define and analyse properties of quantum protocols. 
The properties to be verified over QMC are expressed using the quantum computation tree logic
(QCTL), a temporal logic for reasoning about evolution of quantum
systems introduced in \cite{MCQMC2013} that is a natural extension
of PCTL
.
\begin{definition}[Quantum Computation Tree Logic \cite{MCQMC2013,QPMC}]
A QCTL formula is a formula over the following grammar:
\begin{center}$\Phi ::= a \ | \  \lnot\Phi \ | \ \Phi \wedge \Phi \ | \ \mathbb{Q_{\sim \epsilon}}[\Phi] $\end{center}\begin{center}$\phi ::= X\Phi \ | \ \Phi U^{\leq \mathit{k}} \Phi \ | \ \Phi U \Phi $\end{center}
where $a \in AP$, $\sim \ \in \{ \lesssim, \gtrsim, \eqsim \}$, 
$\mathcal{E} \in \mathcal{S^I}(\mathcal{H})$, $k \in \mathbb{N}$. 
$\Phi$ is a \emph{state }formula, while $\phi$ is a \emph{path } formula. 
\end{definition}
The quantum operator formula $\mathbb{Q_{\sim \epsilon}}[\phi]$ is
a more general case of the PCTL probabilistic operator $\mathbb{P_{\sim p}}[\phi]$
and it expresses a constraint on the probability that the paths from
a certain state satisfy the formula $\phi$.
Besides the logical operators presented in QCTL, QPMC supports an
extended operator $Q=? [\phi]$ to calculate (the matrix representation of) the superoperator
satisfying $\phi$. Moreover,
QPMC provides a function $qeval((Q=?)[\varphi], \rho)$ to compute the density operator obtained from applying
the resultant superoperator on a given density operator $\rho$,
and $qprob((Q=?)[\phi], \rho) = tr(qeval((Q=?)[\phi], \rho)))$ to calculate the probability of satisfying $\phi$,
starting from the quantum state $\rho$ \cite{QPMC}.



\section{From Circuits to Quantum Markov Chains}\label{sec:translation}

In order to be able to define a mapping from Quipper to QPMC programs in this section we work at the semantic level. This means that we consider a quantum circuit generated by Quipper and we define a correspondent QMC having an \emph{equivalent} behavior.

\subsection{Circuits}

We first need a formal definition of quantum circuits generated from Quipper. Even though Quipper supports also classical wires, here we focus on circuits over quantum ones.  
As in Quipper, we consider only measurements of one qubit at a time with respect to the standard computational basis.
We assume the reader to be familiar with the classical notions of graphs and boolean circuits. Given a node $v$ of a directed graph we use the notation $In(v)$ ($Out(v)$) to denote the 
number of edges incoming (outcoming, respectively) in $v$. A quantum circuit is an extension of a boolean circuit in which operation gates are labeled with unitary operators.
When a unitary operator is applied to $k$ qubits it is necessary to know in which order the qubits are used for this reason each edge of a quantum circuit has two integer labels. 
\begin{definition}[Quantum Circuit]
A Quantum Circuit is a directed acyclic graph (herein DAG) $C=(V,E)$ whose nodes, also called \emph{gates}, are of types \emph{Qubit (Q)}, \emph{Unitary (U)}, \emph{Measurement (M)} and \emph{Termination (T)} and satisfy the following conditions:
\begin{enumerate}
\item Q gates: each node $v$ of type \emph{Qubit} is an input node, i.e. $In(v) = 0$ and $Out(v) = 1$;
\item U gates: each node $v$ of type \emph{Unitary} is labelled with an integer $dim(v)$ and a square unitary matrix $U(v)$ of complex numbers of dimension $2^{dim(v)}$. Moreover, it holds that $In(v) = Out(v) = dim(v)$;
\item M gates: each node $v$ of type \emph{Measurement} is an output node, i.e. $In(v)=1$ and $Out(v)=0$;
\item T gates: each node $v$ of type \emph{Termination} is an output node, i.e. $In(v)=1$ and $Out(v)=0$.
\item Edges: each edge $e\in E$ is labelled with two integers $\mathcal{S}(e)$ and $\mathcal{T}(e)$ such that: 
\begin{itemize}
\item for each node $u$ the set of labels $\mathcal{T}(\cdot)$ of the edges ingoing in $u$ is $\{1,\dots,In(u)\}$;
\item for each node $u$ the set of labels $\mathcal{S}(\cdot)$ of the edges outgoing from $u$ is $\{1,\dots, Out(u)\}$.
\end{itemize}
\end{enumerate}
A Quantum Circuit with $k$ nodes of type \emph{Qubit} is said to have \emph{size} $k$.
\end{definition}

\begin{example}\label{ex1}

Let us consider the following Quipper function implementing Deutsch's algorithm.

\lstinputlisting[language=Haskell, breaklines=true, firstline=3, basicstyle=\scriptsize\ttfamily\linespread{0.5}, lastline=9]{deutsch_paper.hs}

Quipper graphically represents the circuit as shown in Figure \ref{fig:quipper-deutsch}.

\begin{figure}[H]
\begin{center}\includegraphics[scale=0.35]{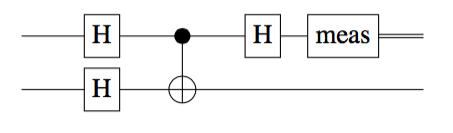}\end{center}
\protect\caption{Deutsch circuit in Quipper}
\label{fig:quipper-deutsch}
\end{figure}

Our definition enriches the above representation with labels denoting the order in which the qubits are used, as depicted in Figure \ref{fig:labels-deutsch}.

\vspace{-5ex}
\begin{figure}[H]
\begin{center}\includegraphics[scale=0.30]{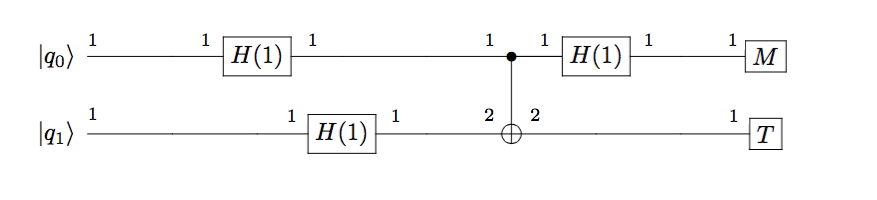}\end{center}
\protect\caption{Deutsch circuit with labels}
\label{fig:labels-deutsch}
\end{figure}
\end{example}

\vspace{-5ex}
\begin{definition}[Circuit Normal Form]
A Quantum Circuit of size $k$ is said to be in \emph{Normal Form} if each \emph{Unitary} node $v$ in the circuit has $dim(v)=k$.
\end{definition}

\vspace{-5ex}
\begin{figure}[H]
\begin{center}\includegraphics[scale=0.28]{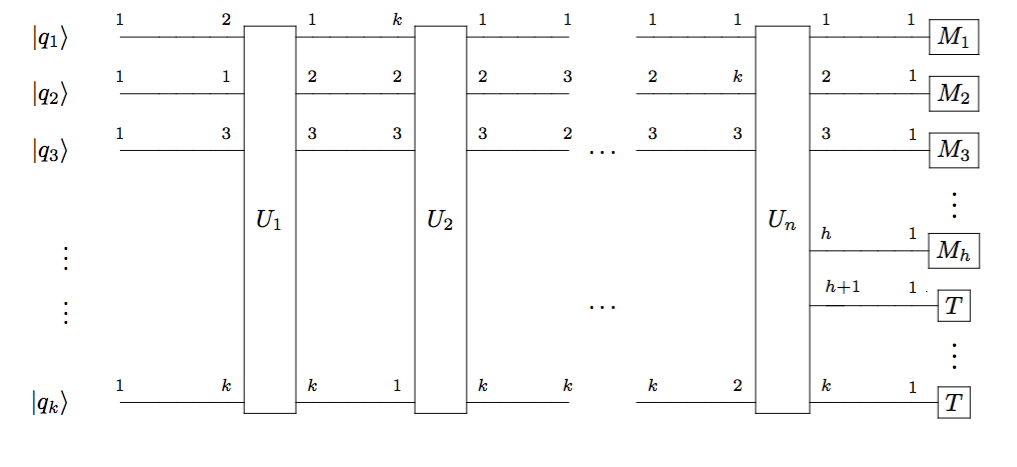}\end{center}
\protect\caption{Example of a circuit in Normal Form}
\label{fig:normal-form}
\end{figure}

\begin{definition}[Strong Normal Form]
A Quantum Circuit $C$ of size $k$ is said to be in \emph{Strong Normal Form} (herein SNF) if $C$ is in Normal Form, for each edge $e \in E$ between two Unitary nodes $\mathcal{S}(e) = \mathcal{T}(e)$ holds and the first $h\leq k$ edges outgoing the last Unitary node enter into Measurement nodes.
\end{definition}

\vspace{-5ex}
 \begin{figure}[H]
\begin{center}\includegraphics[scale=0.28]{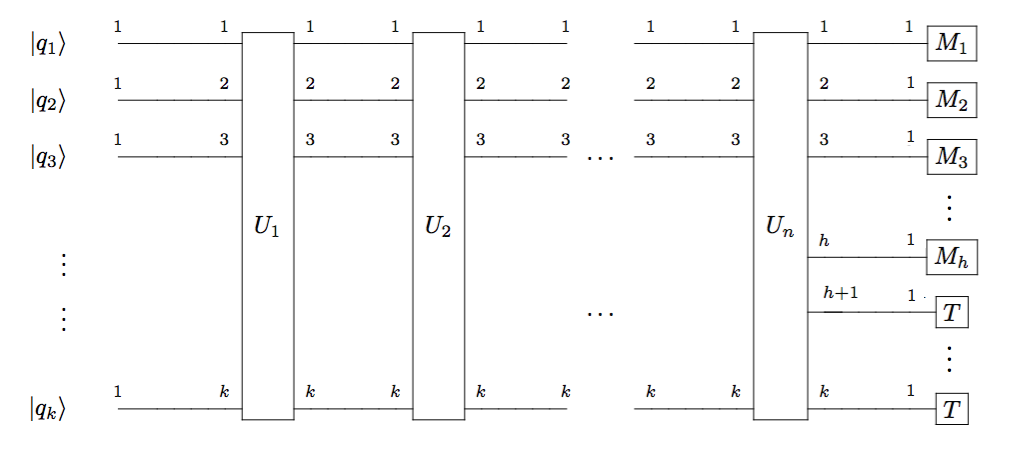}\end{center}
\protect\caption{Example of a circuit in Strong Normal Form}
\label{fig:strong-normal-form}
\end{figure}
A circuit $C$ in SNF is completely specified by the tuple $\langle k,[U_i,\dots,U_n],h \rangle$ where $k$ is the size of $C$, $U_1,\dots,U_n$ are the Unitary operators in the order they occur in $C$, and $h$ is the number of Measurement nodes.

%

In Figure \ref{fig:normal-form} we can see that in a circuit in Normal Form the order of the labels on the edges is not preserved. This is due to the fact that many gates require to be applied to a permutation of the input qubits. On the contrary, a circuit in SNF requires a precise ordering of the input and output edges. We will see that in order to match this requirement, SWAP operators have to be added. 

We now need a notion of equivalence between quantum circuits. This will allow us to move from a generic quantum circuit to a SNF circuit.
Intuitively, two quantum circuits are equivalent if, for any $k$-tuple of initial values of the qubits, the values of the qubits before measurements/terminations are the same. Moreover,
to be equivalent two circuits need to give the same outputs with the same probabilities.
Formally, let $C$ be a Quantum Circuit of size $k$ we denote by $Sem(C)$ the pair of functions $ (F(C),M(C))$ where: 
\begin{itemize}
\item $F(C):\mathcal{H}^k \longrightarrow \mathcal{H}^k$ is the function which maps $k$ qubits to the value they have just before the Measurement and Termination nodes;
\item $M(C):\mathcal{H}^k\times \{0,1\}^h \longrightarrow [0,1]$ is the function such 
that $M(C)(|\psi\rangle, (b_1,\dots,b_h))$ is the probability of getting output $(b_1,\dots,b_h)\in \{0,1\}^h$ on input $|\psi\rangle$.
\end{itemize}
Notice that if $C=\langle k,[U_i,\dots,U_n],h \rangle$ is a SNF circuit, then $F(C)(|\tau\rangle)=U_n\dots\ U_1 |\tau\rangle$. 
\begin{definition}[Quantum Circuit Equivalence]\\
Given two Quantum Circuits $C_1$ and $C_2$ of size $k$, $C_1$ and $C_2$ are equivalent, denoted by   $C_1 \approx C_2$ if and only if $ Sem(C_1) = Sem(C_2)$.

\end{definition}
%
%
\begin{lemma}
Every Quantum Circuit is equivalent to a circuit in Normal Form.
\label{NF}
\end{lemma}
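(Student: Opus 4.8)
The plan is to transform an arbitrary quantum circuit $C=(V,E)$ of size $k$ into an equivalent one in Normal Form by replacing every Unitary node with a Unitary node acting on all $k$ wires, padding each original operator with identities on the untouched qubits and conjugating by appropriate permutation (SWAP) operators so that the right qubits are fed to the original gate. First I would fix, at each ``horizontal cut'' of the DAG (a maximal antichain of edges that represents the global state of all $k$ qubits between consecutive layers of gates), an enumeration of the $k$ wires crossing that cut. Since $C$ is a DAG whose only sources are the $k$ Qubit nodes and whose Unitary nodes satisfy $In(v)=Out(v)=dim(v)$ while $M$/$T$ nodes have $In=1$, $Out=0$, a standard topological-sort argument shows every such cut carries exactly $k$ wires until qubits start being consumed by $M$/$T$ nodes; for simplicity I would first reduce to the case where all measurements/terminations occur at the end (or, more carefully, treat $M$ and $T$ nodes as the final layer since they do not alter $F$ and only read off coordinates for $M(C)$).

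The key construction: process the Unitary nodes in topological order. When a Unitary node $v$ with $dim(v)=d$ and matrix $U(v)$ is encountered, it reads wires $i_1,\dots,i_d$ (in the order given by the $\mathcal{T}(\cdot)$ labels) from the current cut. Let $\pi$ be a permutation of $\{1,\dots,k\}$ sending those wires into positions $1,\dots,d$, and let $P_\pi$ be the corresponding permutation matrix on $\mathbb{C}^{2^k}$ (a product of SWAPs). Replace $v$ by the $k$-dimensional Unitary node with matrix $P_\pi^{-1}\,\bigl(U(v)\otimes I_{2^{k-d}}\bigr)\,P_\pi$, and relabel the outgoing edges by the $\mathcal{S}(\cdot)$ convention so the cut enumeration is consistently maintained. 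Doing this node by node yields a circuit $C'$ in Normal Form, with the edge labels on each cut being a fixed permutation of $\{1,\dots,k\}$ as the definition requires.

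It then remains to check $C\approx C'$, i.e.\ $Sem(C)=Sem(C')$. For $F$: by induction on the number of processed Unitary layers, the composite operator applied to the $k$-tuple of input qubits is unchanged, because $P_\pi^{-1}(U(v)\otimes I)P_\pi$ acts on the global state exactly as $U(v)$ acts on the selected $d$-qubit subsystem while leaving the others fixed — this is the routine ``wiring = conjugation by permutation, untouched wires = tensor with identity'' fact. For $M$: since the qubit values just before the $M$ and $T$ nodes are the same (that is precisely what $F$ equality gives) and we measure one qubit at a time in the computational basis at the corresponding output positions, the Born-rule probabilities $M(C)(|\psi\rangle,(b_1,\dots,b_h))$ coincide. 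I expect the main obstacle to be purely bookkeeping rather than conceptual: carefully defining the cut enumeration so that the edge-label conditions of Definition (Quantum Circuit) are met after the substitution, and arguing that a consistent topological ordering of Unitary nodes exists and that every intermediate cut indeed has exactly $k$ wires (handling the interleaving of $M$/$T$ nodes with Unitary nodes cleanly). Once the wire-accounting is set up, the semantic equivalence is a short induction.
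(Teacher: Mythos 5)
Your proposal is correct and follows essentially the same route as the paper: a topological ordering placing Qubit nodes first and $M$/$T$ nodes last, followed by an induction over the Unitary nodes in which each $d$-dimensional gate is padded to a $k$-dimensional one by tensoring with the identity. The only difference is that your conjugation by the permutation matrices $P_\pi$ already normalizes the wire ordering --- work the paper defers to the subsequent lemma on Strong Normal Form, where generalised SWAP gates are introduced; for Normal Form alone this is not required, since that definition places no constraint on the edge labels.
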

\begin{proof}
Let $C$ be a Quantum Circuit of size $k$. $C$ is a DAG so it admits a \emph{topological ordering} of its nodes. Qubit nodes do not have any incoming edge so we choose an ordering in which the first nodes are all the ones of type Qubit. Measurement and Termination nodes do not have any outgoing edge so we choose them as final nodes in the ordering. The nodes in between initial and final nodes are only the one of type Unitary. We will proceed by induction on the number $n$ of Unitary nodes $\{U_1,\dots,U_n\}$.

\textbf{Base case: } For $n=1$ our circuit has only one Unitary gate. If $dim(U_1) = k$ then the circuit is in Normal Form.\\If $dim(U_1) = h < k$ we replace $U_1$ with the node $\widetilde{U}_1 = U_1 \otimes I$ where $dim(I) = {k-h}$ and then we append the remaining $k - h$ edges. 

\textbf{Induction step: } If $n>1$ by induction we know that we can normalise the first $n-1$ Unitary gates and we proceed as in the base case on the last one.
\qed
\end{proof}

\begin{lemma}\label{SNF}
Every Quantum Circuit in Normal Form is equivalent to a circuit in Strong Normal Form.
\end{lemma}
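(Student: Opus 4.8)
The plan is to leave the Unitary gates of the given Normal Form circuit untouched, but to realise explicitly — as extra Unitary nodes labelled by permutation matrices — the wire re-orderings that the circuit currently performs implicitly through its wiring, both between one gate and the next and between the last gate and the Measurement/Termination nodes. As remarked just before the statement, these extra gates are exactly the ``SWAP operators that have to be added'', and it is convenient to use one permutation gate per re-ordering. Concretely, I would first fix a topological ordering of $C$ as in the proof of Lemma~\ref{NF}: the $k$ Qubit nodes, then the Unitary nodes $U_1,\dots,U_n$ — each of dimension $k$ because $C$ is in Normal Form — then the $h$ Measurement and $k-h$ Termination nodes, with the Measurement nodes listed first. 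Since every $U_j$ has exactly $k$ ingoing and $k$ outgoing edges and there are exactly $k$ Qubit nodes and $k$ output nodes, the edges crossing the cut between two consecutive levels determine, through their labels $\mathcal S(\cdot)$ and $\mathcal T(\cdot)$, a bijection of $\{1,\dots,k\}$; write $\pi_j$ for the one between $U_j$ and $U_{j+1}$ ($1\le j<n$) and $\pi_n$ for the one between $U_n$ and the output nodes. For a permutation $\pi$ of $\{1,\dots,k\}$, let $P_\pi$ be the corresponding permutation operator on $\mathcal H^k$, obtained by permuting the $k$ tensor factors; $P_\pi$ is unitary of dimension $k$, hence a legal label for a Unitary node of a Normal Form circuit.

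Next I would build $C'$: for $j=1,\dots,n$ with $\pi_j$ non-trivial, delete the bent bundle of edges leaving $U_j$, insert a fresh Unitary node $G_j$ with $U(G_j)=P_{\pi_j}$ and $dim(G_j)=k$, and reconnect $U_j\to G_j\to(\text{the next level})$ using only straight edges, i.e. with $\mathcal S(e)=\mathcal T(e)$ on each new edge. For $j<n$ the next level is $U_{j+1}$; for $j=n$ it is the block of output nodes, which are now reached by straight edges, so — the Measurement nodes being listed first — the first $h$ edges leaving $G_n$ enter Measurement nodes. (If $n=0$, first insert a single Unitary node labelled by $I$ of dimension $k$, so that a last Unitary node exists; and if $\pi_n$ happens to be trivial, $U_n$ already has this property and no $G_n$ is needed.) By construction every Unitary node of $C'$ has dimension $k$, every edge between two Unitary nodes is straight, and the first $h$ edges leaving the last Unitary node enter Measurement nodes; hence $C'$ is in Strong Normal Form.

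It then remains to check $C\approx C'$, i.e. $Sem(C)=Sem(C')$. For $F$: tracing either circuit from the Qubit nodes up to just before the output nodes composes the same alternation of the operators $U_j$ with the wire permutations $\pi_j$, the only difference being that in $C'$ each $\pi_j$ is carried out by the explicit gate $G_j$; since $P_{\pi_j}$ is precisely the linear operator realising $\pi_j$, the resulting map $\mathcal H^k\to\mathcal H^k$ — and hence the value presented to each output node — is unchanged, so $F(C)=F(C')$. For $M$: the wires reaching the Measurement nodes and the state they carry just before measurement are the same in $C$ and $C'$, so $M(C)(|\psi\rangle,(b_1,\dots,b_h))=M(C')(|\psi\rangle,(b_1,\dots,b_h))$ for every outcome tuple, and therefore $Sem(C)=Sem(C')$. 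I expect the step needing the most care to be the tail, $j=n$: one must set the edge labels of $G_n$ so that the Measurement nodes are served by the leading ports \emph{and} every output node still receives exactly its original value, so that $F$ and the full outcome distribution $M$ are preserved on the nose rather than merely up to a relabelling of the outputs — the rest of the argument is routine bookkeeping with the conditions on $\mathcal S(\cdot)$ and $\mathcal T(\cdot)$.
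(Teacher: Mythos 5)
Your proposal is correct and follows essentially the same route as the paper: it inserts a generalized SWAP (permutation) gate after each Unitary node so that the wire re-orderings are absorbed into explicit unitary operators, leaving all inter-gate edges straight and the measurement wires in leading position, with semantics preserved because $P_{\pi}$ realises exactly the permutation previously encoded in the edge labels. Your write-up is in fact more detailed than the paper's (which only sketches the construction via compositions of binary SWAPs), notably in handling the output level and the degenerate case $n=0$.
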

\begin{proof}
In order to prove our thesis we need a notion of generalised $\textit{SWAP}$ gate. The $\textit{SWAP}$ gate takes in input two qubits and swaps them, i.e., $\textit{SWAP} |x,y\rangle= |y,x\rangle$
A \emph{generalised} $\textit{SWAP}$ gate is an operator acting on $k$ qubits that returns in output a permutation $p_i$ of them. It is possible to build such operators by combining sequentially two-dimensional $\mathit{SWAP}$ gates.
Using the definition of generalised $\textit{SWAP}$ gates the proof is strightforward. Given a quantum circuit in Normal Form of size $k$, we obtain a circuit in SNC by opportunely swapping the Qubit indexes after the application of a unitary gate.
\qed
\end{proof}

\subsection{From Strong Normal Form Circuits to QMC's}

We are ready to define the QMC associated to a circuit in SNF. Intuitively, the states of the QMC correspond to the edges of the circuit, while the edges of the QMC connect subsequent states. Moreover, states without outgoing edges are added in the QMC to represent all the possible outputs of the circuit.
\begin{definition}[QMC associated to a Circuit]
Let $C$ be a Quantum Circuit in SNF of size $k$ with $n$ Unitary nodes $\{U_1, \dots ,U_n\}$ and $h$ Measurement nodes, the QMC $Q_C$ \emph{associated} to $C$ is defined as follows:
\begin{itemize}
\item the k-tuple of edges of $C$ entering the Unitary node $U_i$ is associated to the \emph{state} $s_i$ in $Q_C$;
\item the k-tuple of edges outgoing from the last Unitary node $U_n$ is associated to the state $s_{n+1}$;
\item in $Q_C$ there are $2^h$ states $t_0,t_1,\dots,t_{2^h-1}$;
\item for each $i \in \{1, \dots, n\}$ there is an edge from $s_i$ to $s_{i+1}$ is labelled with the superoperator $SO(U_i)$ associated to the Unitary gate $U_i$;
\item for each $i\in \{0, \dots , 2^h-1 \}$ there is an edge from $s_{n+1}$ to $t_i$ labelled with the superoperator $\widetilde{M}_i = M_i^h \otimes I^{k-h}$, where $I^{k-h}$ is the identity matrix of size $2^{k-h}$ and $M_i^h$ is a matrix of size $2^h$ having $1$ in the $i+1$-th position and all $0$'s in the remaining. 
\end{itemize}

\label{def:QC}
\end{definition}
In Figure \ref{fig:QMC} we can see the QMC associated to the example circuit shown in Figure \ref{fig:strong-normal-form}.
\begin{figure}[H]
\begin{center}\includegraphics[scale=0.275]{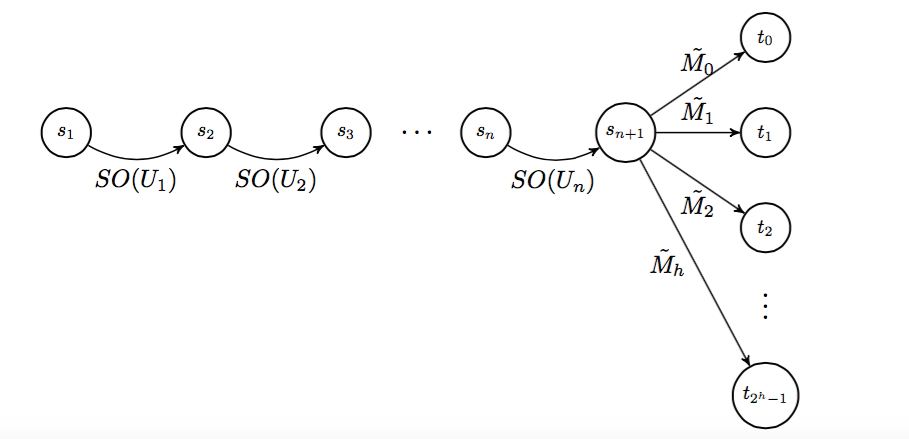}\end{center}
\protect\caption{QMC associated to the circuit  of Figure \ref{fig:strong-normal-form}}
\label{fig:QMC}
\end{figure}
\begin{lemma}\label{QMCQC}
Given a quantum circuit $C$ in SNF we can always build the QMC $Q_C$ associated to $C$ and it holds that:
\begin{enumerate}
\item $\forall |\tau\rangle \in \mathcal{H}, \ \forall i \in \{1,\dots, n\}$, \begin{gather*}U_i |\tau\rangle = |\psi\rangle \qquad \mbox{iff} \qquad SO(U_i)|\tau\rangle \langle \tau| SO(U_i)^{\dag} = |\psi\rangle \langle \psi|\end{gather*}
\item $\forall |\tau\rangle \in \mathcal{H}$ if $F(C)(|\tau\rangle)=|\psi\rangle$ and $M(C)(|\tau\rangle,\{b_1,\dots,b_h\})=p$, with $m=bin(b_1\dots b_h)$ (i.e., the natural with binary expansion $b_1\dots b_h$) then:
$$p = tr(\widetilde{M}_m |\psi \rangle \langle \psi| \widetilde{M}_m^{\dag})$$
and
$$\widetilde{M}_m |\psi\rangle \langle\psi| \widetilde{M}_m^{\dag} = |b_1,\dots,b_h\dots\rangle \langle b_1,\dots, b_h\dots|$$
\end{enumerate}
\end{lemma}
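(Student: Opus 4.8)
The plan is to verify the two claims essentially by direct computation, unwinding the definitions of $SO$, $\widetilde{M}_m$, $F(C)$ and $M(C)$. The existence of $Q_C$ is immediate: Definition~\ref{def:QC} is an explicit construction, and one only has to check it yields a legitimate QMC, i.e. that for each state the sum of the outgoing superoperators is trace-preserving. For the states $s_i$ with $i\le n$ this is clear since $SO(U_i)$ is trace-preserving (it is the superoperator of a unitary). For $s_{n+1}$ one checks $\sum_{i=0}^{2^h-1}\widetilde{M}_i^{\dag}\widetilde{M}_i = \sum_i (M_i^h)^{\dag}M_i^h \otimes I^{k-h} = I^{2^h}\otimes I^{k-h} = I$, using that the $M_i^h$ are the projectors onto the computational basis states of the first $h$ qubits; hence $\sum_i \widetilde{M}_i$ is trace-preserving as well.

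For item~1, I would simply substitute $SO(U_i)(\rho) = U_i \rho U_i^{\dag}$ with $\rho = |\tau\rangle\langle\tau|$. If $U_i|\tau\rangle = |\psi\rangle$ then $SO(U_i)(|\tau\rangle\langle\tau|) = U_i|\tau\rangle\langle\tau|U_i^{\dag} = |\psi\rangle\langle\psi|$, giving one direction; conversely, if $SO(U_i)(|\tau\rangle\langle\tau|) = |\psi\rangle\langle\psi|$, then $U_i|\tau\rangle\langle\tau|U_i^{\dag} = |\psi\rangle\langle\psi|$, and since $U_i$ is unitary the rank-one vector $U_i|\tau\rangle$ must equal $|\psi\rangle$ up to a global phase, which is the standard identification of pure states with density matrices. (One should remark explicitly, as the paper does elsewhere, that the correspondence $|\psi\rangle \leftrightarrow |\psi\rangle\langle\psi|$ is phase-insensitive, so the "iff" is to be read modulo global phase.)

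For item~2, recall from the text that for a SNF circuit $F(C)(|\tau\rangle) = U_n\cdots U_1|\tau\rangle = |\psi\rangle$, and that $M(C)(|\tau\rangle,(b_1,\dots,b_h))$ is the probability of measuring the first $h$ qubits of $|\psi\rangle$ in the computational basis and obtaining the string $b_1\dots b_h$. By the state-vector measurement postulate recalled in Section~\ref{sec:quantum}, applied with the projector $|b_1,\dots,b_h\rangle\langle b_1,\dots,b_h|\otimes I^{k-h}$ onto the first $h$ qubits, this probability is exactly $\langle\psi|\,(|b_1\dots b_h\rangle\langle b_1\dots b_h|\otimes I^{k-h})\,|\psi\rangle$. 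Now observe that $\widetilde{M}_m = M_m^h\otimes I^{k-h}$, where (with $m = bin(b_1\dots b_h)$) the matrix $M_m^h$ has a single $1$ in position $(m{+}1,m{+}1)$ and is therefore precisely $|b_1\dots b_h\rangle\langle b_1\dots b_h|$, the projector onto the $m$-th computational basis vector. Hence $\widetilde{M}_m$ is Hermitian and idempotent, so $\widetilde{M}_m^{\dag}\widetilde{M}_m = \widetilde{M}_m$, and
\[
tr(\widetilde{M}_m|\psi\rangle\langle\psi|\widetilde{M}_m^{\dag}) = tr(\widetilde{M}_m|\psi\rangle\langle\psi|) = \langle\psi|\widetilde{M}_m|\psi\rangle = p,
\]
which is the first equation. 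For the second, write $|\psi\rangle = \sum_{x} c_x |x\rangle$ in the computational basis of all $k$ qubits; then $\widetilde{M}_m|\psi\rangle = \sum_{y}\Big(\sum_{x:\,x_{1..h}=b_{1..h}} c_x\,\delta_{\cdot}\Big)|b_1\dots b_h\,y\rangle$ collapses to the component of $|\psi\rangle$ supported on basis states whose first $h$ bits are $b_1\dots b_h$, and consequently $\widetilde{M}_m|\psi\rangle\langle\psi|\widetilde{M}_m^{\dag}$ is the rank-one (unnormalised) operator of the claimed form $|b_1,\dots,b_h,\dots\rangle\langle b_1,\dots,b_h,\dots|$. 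I expect the only delicate point is bookkeeping: keeping the tensor-factor ordering straight between "the first $h$ qubits" as used in $F(C)/M(C)$ and the block structure $M_m^h\otimes I^{k-h}$ of $\widetilde{M}_m$ — which is exactly what the SNF condition (the first $h$ outgoing edges of $U_n$ enter the Measurement nodes) is there to guarantee — and being careful to state the global-phase caveat so that item~1's equivalence is literally true. \qed
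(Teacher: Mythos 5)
Your proposal is correct and takes the same route the paper intends: the paper's own proof is literally the single sentence ``It immediately follows from our definitions,'' and what you have done is carry out that unwinding in full --- checking trace-preservation of the transition matrix, substituting $SO(U_i)(\rho)=U_i\rho U_i^{\dag}$, and identifying $M_m^h$ with the projector $|b_1\dots b_h\rangle\langle b_1\dots b_h|$. Your two caveats (the ``iff'' in item~1 holds only modulo a global phase, and the operator in item~2 is an \emph{unnormalised} rank-one projector, since its trace is $p$ rather than $1$) are genuine imprecisions in the paper's statement that your write-up correctly flags, so keep them.
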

\begin{proof}
It immediately follows from our definitions.
\qed
\end{proof}
The above lemma states an equivalence between the semantics of a circuit $C$ in SNF and its associated QMC.
Thus, any Temporal Logic coherently defined on both formalisms can be equivalently model checked either on the circuit or on its associated QMC.
%
%
%


\subsection{Translation Algorithm}
The results described in the previous sections allow us to define an algorithm that maps a quantum circuit into an \emph{equivalent} QMC.
In particular, Algorithm \textsf{Translate} performs the following steps:
\begin{itemize}
\item it transforms a quantum circuit into a normal form circuit (see Lemma \ref{NF});
\item it transforms a normal form circuit into a SNF circuit  (see Lemma \ref{SNF});
\item it transforms a SNF circuit into its corresponding QMC (see Definition \ref{def:QC}).
\end{itemize}
Hence, given a quantum circuit $C$ the output of \textsf{Translate}($C$) is a QMC equivalent to $C$ in the sense of Lemma \ref{QMCQC}.

The computational complexity of \textsf{Translate}($C$) depends on the number $n$ of Unitary nodes occurring in $C$ and on its size $k$.
For each Unitary we need to perform a number of binary swaps which depends on $k$. Without any efficient strategy in the worst case we could
perform $\Theta(k^2)$ binary swaps. Hence, \textsf{Translate}($C$) generates a QMC having $O(n*k^2)$ internal nodes. Each of this step requires
the computation of a matrix of size $2^k$.
However, we can lower the complexity of the algorithm by directly implementing generalized swaps without relying on binary ones. Such optimization
would generate a QMC having at most $O(n)$ internal nodes, requiring the computation of $O(n)$ swap matrices.

\section{Implementation}\label{sec:implementation}

In Section~\ref{sec:translation} we presented an abstract algorithm that translates a quantum circuit into a QMC.
We now describe an implementation of the \textsf{Translation} Algorithm in which the input quantum circuit is a Quipper function in the \texttt{Circ} monad and the
output QMC is a QPMC model.
%
Our implementation exploits the \texttt{Transformer} module of Quipper --a library providing functions for defining general purpose transformations on low-level circuits-- and works at data structure level. Using the \texttt{Transformer} module we can use Quipper's code, avoiding to implement the instructions again in an intermediate language.


The actual translation can be summarised in three steps. At first the gates in the quantum circuit must be grouped together with their associated qubits, taking care that the execution order is preserved. In this way we have an abstract representation of both the states and the transitions of the QMC. Then, as a second step, we calculate the matrix representation of the quantum gates. We also implemented a set of functions useful to perform operations on matrices (e.g., the tensor product). 
It is important to note that, since we need a circuit in SNF, our code provides a set of functions that generate the required swaps using compositions of binary swaps. Then our algorithm takes the resulting matrix and associate it to the gate input qubits, while the identity matrix is associated the remaining ones. Finally, the qubits are moved back in their original positions. All the matrices are computed in MATLAB notation. The last step is the conversion of the list of transitions into QPMC code. All these functions have been written in order to be kept as polymorphic as possible.
\begin{example}
Let us consider again the Quipper function for Deutsch Algorithm presented in Example \ref{ex1}.
As we showed, it can be compiled in Quipper generating the circuit represented in Figure \ref{fig:quipper-deutsch}. 



Our implementation converts Deutsch Quipper code into the QPMC model below.
\lstinputlisting[language=Haskell, breaklines=true, firstline=1, basicstyle=\scriptsize\ttfamily\linespread{0.5}, lastline=15]{converted-deutsch.txt}
\end{example}

Notice that, differently from what we wrote in our definition of QMC associated to a circuit, in the implementation we do not distinguish states $s_i$'s from states $t_i$'s in the generated QPMC model.

Our implementation is available at \url{https://github.com/miniBill/entangle}.

\section{Experimental Results}\label{sec:experiments}
We have tested our translation tool with our Quipper implementation of \emph{Grover's search algorithm} \cite{Nielsen-Chuang}. The aim of Grover's algorithm is that of searching for the index $x$ of an element in a $N$-dimensional space with no structure. We assume $N = 2^n$, so that the indexes are represented by $n$-bit strings. The algorithm solves the problem by considering a function $f: \{0,1\}^n \to \{0,1\}$ such that $f(x) = 1$ if and only if the string $x$ is a solution. Classically, this problem can be solved in $O(N)$ steps while using a quantum oracle it can be probabilistically solved in $O(\sqrt{N})$ steps. 
Grover exploits quantum parallelism to give to the quantum oracle  all the possible input strings at the same time. 
Then the oracle marks the strings corresponding to possible solutions. At this point it performs some steps of amplitude amplification in order to maximize the probability of getting the desired result after the measurement. The result is the index of the searched element. The algorithm is probabilistic, because of the amplitude amplification step. 
Anyway, for $N = 4$, after one iteration it behaves in a deterministic way, giving the right result with probability equal to $1$.

For the experiment we decided to use a search space of size $N = 4$. The oracle returns the string $x = 3$, so the state after the measurement will collapse to $|11\rangle |1\rangle$. 
The algorithm needs an ancilla qubit that can be easily discarded at the end of the computation. 
The Quipper circuit of the algorithm can be seen in Figure \ref{fig:grover-algorithm}.
\begin{figure}[H]
\begin{center}\includegraphics[scale=0.35]{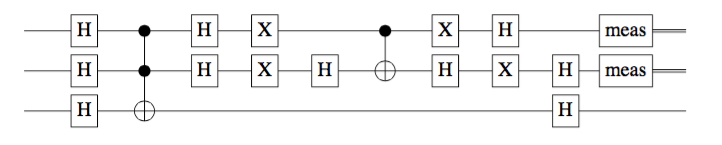}\end{center}
\protect\caption{Grover's algorithm in Quipper}
\label{fig:grover-algorithm}
\end{figure}

\subsubsection{Quipper Implementation}
At first we implemented the circuit using Quipper as shown below.
\lstinputlisting[language=Haskell, breaklines=true, firstline=1, basicstyle=\scriptsize\ttfamily\linespread{0.5}, lastline=19]{grover_quipper.txt}
The first $3$ Hadamard gates are needed to obtain the linear superposition of the input qubits.
The CCNOT gate is the oracle.
The remaining gates, but the last two, implement the amplitude amplification steps.
The last Hadamard gate on the ancilla qubit performs the interference. Finally, the first 2 qubits are measured. 



\subsubsection{Translation and Validation}
Exploiting our implementation we automatically generate the code for QPMC shown in the Appendix.

According to the calculations we should reach the terminal state $s_{14}$ with probability equal to $1$, while the other terminal states must have an associated probability equal to $0$. 
We tested the formulas to evaluate the density matrix associated to each terminal state with input state $|1\rangle \langle 1|$ and the results are the following.

\begin{tabular}{p{5cm} p{6cm}}
  &  \\
\lstinputlisting[language=Haskell, breaklines=true, firstline=71, basicstyle=\tiny\ttfamily\linespread{0.5}, lastline=80]{grover_quipper.txt} & \lstinputlisting[language=Haskell, breaklines=true, firstline=170, basicstyle=\tiny\ttfamily\linespread{0.5}, lastline=179]{grover_quipper.txt}

\end{tabular} 
\\
\begin{tabular}{p{5cm} p{6cm}}
  &  \\
\lstinputlisting[language=Haskell, breaklines=true, firstline=82, basicstyle=\tiny\ttfamily\linespread{0.5}, lastline=91]{grover_quipper.txt} & \lstinputlisting[language=Haskell, breaklines=true, firstline=183, basicstyle=\tiny\ttfamily\linespread{0.5}, lastline=192]{grover_quipper.txt}

\end{tabular} 

%
%
%

It is possible to see that the trace of the first three matrices is equal to $0$, meaning that the probability of reaching those states is null. The density matrix associated to the last state has trace equal to $1$, meaning that the computation will surely reach that state, validating in this way the expected results.
We also tested formulas to calculate the accumulated superoperators for each state, but since the resulting matrices have size $2^6 \times 2^6$ we do not report them here. The results can be found at \url{https://github.com/miniBill/entangle}.

\subsection{Scalability of the swap algorithm}
We also decided to perfom some scalability tests on an artificial example which requires a high number of swaps. 
Recall that, since we need the circuit to be translated in SNF, for each Unitary gate we need to perform a number of binary swaps depending on the number $k$ of qubits used in the circuit.
In this part of the experiment we focused on the execution time of our implementation, i.e., the time required to produce the QPMC model.
The circuits given in input have been choosen to maximize the number of binary swaps required by our implementation. 
An example of such circuits of size $7$ can be seen in Figure \ref{fig:Test}.

\begin{figure}[H]
\begin{center}\includegraphics[scale=0.20]{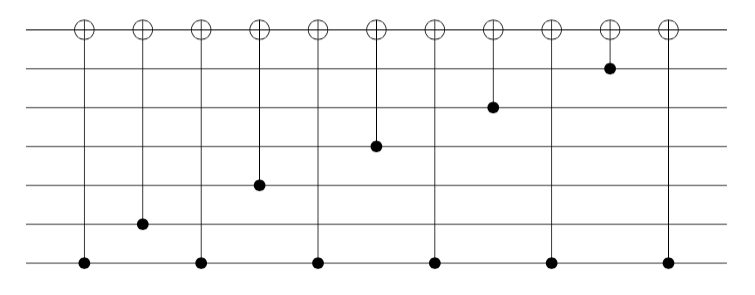}\end{center}
\protect\caption{Test circuit of size $7$}
\label{fig:Test}
\end{figure}

We decided to test circuits built using from $3$ to $8$ qubits.
Times are recorded using the \texttt{time} utility of the Bash shell on an early 2014 MacBook Air with a 1.4 GHz Intel Core i5 processor. For each size of the input, the program has been executed five times and the mean time has been computed. The results are shown in Figure \ref{fig:time}. 
We can see that also for a circuit of size $8$, when we have to generate swap matrices of size $2^8 \times 2^8$ our algorithm works in \emph{reasonable} times.
However, we are working on an improvement of our implementation in which we directly generate the swap matrices without having to compose binary swaps.
\begin{figure}[H]
\begin{center}\includegraphics[scale=0.45]{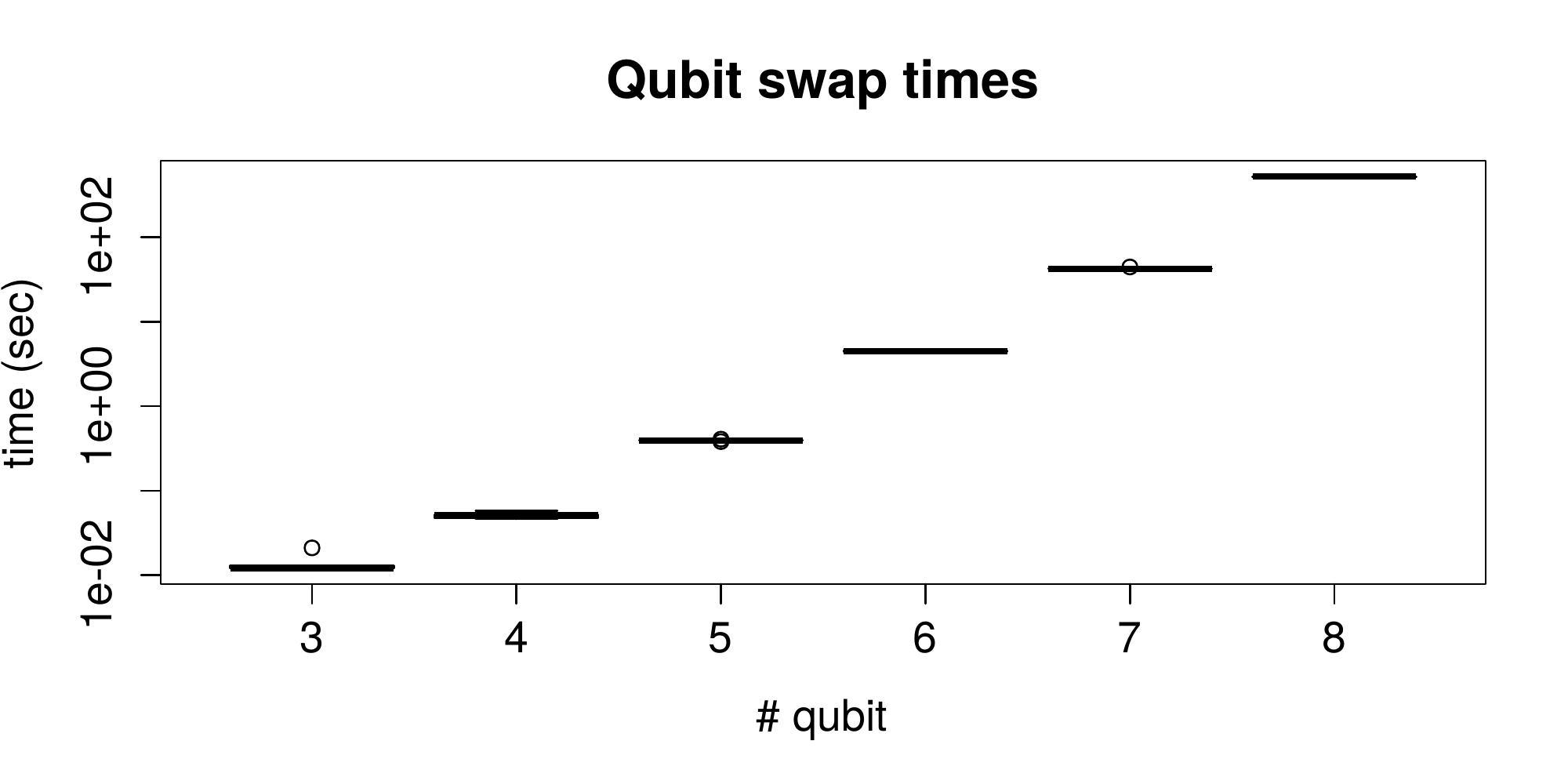}\end{center}
\protect\caption{Scalability test}
\label{fig:time}
\end{figure}

\section{Conclusion}\label{sec:conclusion}

In this work we proposed a framework that performs a translation from Quipper to QPMC. The main idea is to use this framework to create a tool that allows, on the one hand, the description of quantum algorithms and protocols in an high-level programming language, and on the other hand their formal verification.
In doing so we put particular attention in the translation at a semantic level. Quipper uses the state vector formalism and the quantum circuit model of computation while QPMC uses the density matrix formalism and QMC, allowing to consider also the measurements in the verification of the algorithms.
We implemented and tested our translator on some common quantum algorithms and the final results validated our expectations. 
We are working on enrichment and optimization of our framework in order to match the requirement of  validating complex algorithms and protocols, e.g., 
the ones involving also a classical control outside the \texttt{Circ} monad.
Moreover, we intend to  investigate  the specification of properties involving typical quantum and reversibility effects.



\bibliographystyle{
plain}
\bibliography{bibliography}


\section*{Appendix}\label{sec:appendix}

Exploiting our implementation we automatically generate the following code for QPMC.
\lstinputlisting[language=Haskell, breaklines=true, firstline=22, basicstyle=\scriptsize\ttfamily\linespread{0.5}, lastline=68]{grover_quipper.txt}

\end{document}